\newcommand{\E}{\mathbf{E}}
\newcommand{\Prob}{\mathbf{P}}
\newcommand{\Ind}{\mathbf{1}}
\newcommand{\R}{\mathbb{R}}
\newcommand{\dB}{\mathrm{dB}}
\newcommand{\logsd}{\sigma_{\dB}}
\newcommand{\tildeS}{\widetilde{S}}
\newtheorem{thm}{Theorem}
\newtheorem{cor}[thm]{Corollary}
\newtheorem{definition}{Definition}
\newtheorem{lem}[thm]{Lemma}
\newenvironment{lemma}{\bf\begin{lem}\rm\em}{\end{lem}} 
\newtheorem{prop}[thm]{Proposition}
\newtheorem{rem}[thm]{Remark}
\newenvironment{remark}{\bf\begin{rem}\rm}{\end{rem}} 
\title{\Huge Equivalence and comparison of heterogeneous cellular networks}\author{{\bf B. B{\l}aszczyszyn}$^*$ and {\bf H.P. Keeler}$^*$ }
\begin{document}

\maketitle

\begin{abstract}
We consider a general heterogeneous network in which, besides general propagation effects (shadowing and/or fading), individual base stations can have different emitting powers and be subject to different parameters of Hata-like path-loss models (path-loss exponent and constant)  due to, for example, varying antenna heights. We assume also that the stations may have varying parameters  of, for example, the link layer performance (SINR threshold, etc). By studying the {\em propagation processes} of signals received by the typical user from all antennas marked by the corresponding antenna parameters, we show that seemingly different heterogeneous networks based on Poisson point processes can be equivalent from the point of view a typical user. These neworks can be replaced with a model where all the previously varying propagation parameters (including path-loss exponents) are set to constants while the only trade-off being the introduction of an isotropic base station density. This allows one to perform analytic comparisons of different network models via their isotropic representations. In the case of a constant path-loss exponent, the isotropic representation simplifies to a homogeneous modification of the constant intensity of the original network, thus generalizing a previous result showing that the propagation processes only depend on one moment of the emitted power and propagation effects. We give examples and applications to motivate these results and highlight an interesting observation regarding random path-loss exponents. 
\end{abstract}

\begin{keywords}
Heterogeneous networks, multi-tier networks, Poisson process, 
shadowing, fading, propagation invariance, stochastic equivalence.
\end{keywords}

\section{Introduction}
\let\thefootnote\relax\footnotetext{\hspace{-2ex}$^*$Inria/Ens, 23
  av. d'Italie, 75214, Paris, France}
\newcommand{\thefootnote}{\arabic{footnote}}
The rapidly increasing growth of user-traffic in cellular networks is forcing the need for the deployment of multi-tier heterogeneous networks as well as the development of better analytic methods for quantifying their performance. Based on information theoretic arguments, one key performance metric is the signal-to-interference-and-nose-ratio (SINR) experienced by a typical user in the network. The SINR is a function of {\em propagation processes}, which incorporate the distance-dependent path-loss function and (often assumed to be random) fading and/or shadowing, which we refer to as {\em propagation effects}. Consequently, results that cast light on the nature of propagation processes ultimately aid in studying the SINR and other useful characteristics of heterogeneous networks.  

The irregularity of cellular network configurations means that base station positioning is often best assumed to be random, which has motivated the use of models based on stochastic geometry. This assertion has been supported in recent years with tractable models based on the Poisson point process yielding accurate solutions \cite{ANDREWS2011}. Besides the usual tractability and `worst-case' arguments for Poisson models, a recent result~\cite{hextopoi} has shown that a broad range of network configurations give propagation-based results appearing as though the placement of base stations is a Poisson process when sufficiently large log-normal shadowing is incorporated into the model. 

In this work we present two simple yet very useful results on the invariance and equivalence of network characteristics, such as SINR, that are functions of propagation processes. More precisely, we present a marked Poisson model of a random heterogeneous network with the standard power-law path-loss function. Under this model, we assume that the propagation process parameters and base station parameters are all random. We present a generalized  version of {\em propagation (process) invariance}, which shows that propagation processes only depend on  one key moment of the random propagation effects and not their distribution. We list how this result has been used previously in the field of communications.  

Building upon the propagation invariance, we then present the main {\em network equivalence} result stating how a general random heterogeneous network model can be immediately replaced with a `equivalent' network model with the previously random values, including path-loss  exponents, all set to constants. The equivalent `less random' model induces the same propagation process and allows for tractable models of multi-tier cellular networks. This network equivalence allows for the comparison of seemingly quite different heterogeneous models by finding their isotropic and, hence, comparable forms.  Furthermore,  we observe for a constant path-loss exponent  the isotropic representation reduces simply to a homothecy (a modification of the constant intensity) of the original network, which generalizes a previous result showing that the propagation processes only depend on some moment of the emitted power
and propagation effects.  We illustrate the network equivalence result by giving examples and demonstrating how random path-loss exponents effectively change the network density. 

We conclude by discussing possible applications such as deriving $k$-coverage probability expressions for multi-tier networks in the spirit of~\cite{kcovsingle} or deriving more general results that allow the replacement of location-dependent networks with more tractable stationary models. 

\subsection{Related work}
Gilbert and Pollak~\cite{GILBERT1960} derived a classic result showing that a shot noise process, consisting of a sum of functions of Poisson points and some random parameter, remains invariant for many different functions and distributions of random parameters. Lowen and Teich~\cite{LOWEN1990} applied this result to the sum of the power-law functions, akin to the commonly used path-loss function found here, and showed that the sum is independent of the parameter distribution and only relies upon one moment of the random parameter. In the context of  SINR of cellular networks, B\l aszczyszyn  et al.~\cite{blaszczyszyn2010impact} observed this invariance characteristic for interference and propagation losses in general (and not just sums or inteference terms), hence the propagation effects are incorporated into the model by only one moment. Pinto et al.~\cite{PINTO2012} independently derived and used a similar result to show that the node degree of secrecy graphs (based on Poisson processes) is invariant for the distribution of propagation effects. In both papers~\cite{blaszczyszyn2010impact,PINTO2012}, the invariance results are obtained by defining a point process (which we now call propagation process) on the positive real line (a similar process was defined by Haenggi~\cite{HAENGGI2008} but used for different purposes).  More specifically, it was shown that this point process is an inhomogeneous Poisson point process on the positive real line if the underlying base station configuration forms a homogeneous Poisson process. 

In the context of multi-tier (heterogeneous) cellular networks, Dhillon et al.~\cite{DHILLON2012} and Mukherjee~\cite{MUKHERJEE2012} both derived results for the distribution of the (downlink) SINR based on models consisting of independent superpositions of Poisson processes with Rayleigh fading. Madhusudhanan et al.~\cite{MADBROWN2011} obtained similar SINR expressions, but derived and used the above propagation invariance result to show that their (and by extension, the above) results hold for arbitrary propagation effects.  Independently,  B\l aszczyszyn  et al.~\cite{kcovsingle} used the same argument to derive the SINR-based $k$-coverage probability for a single-tier network by first assuming Rayleigh fading, then lifting the assumption via propagation invariance.  It should be stressed that this approach applies to all results based on functions of propagation processes and not just results involving sums or interference terms. A worthy pursuit would be to list all such results that hold under arbitrary propagation effects, but this is beyond the scope of this paper. 

Our second result involves the equivalence of heterogeneous networks with random parameters, including path-loss exponents. For tractability, the aforementioned multi-tier results all assumed constant path-loss exponents across all tiers. Jo et al.~\cite{JO2012} extended this to a model with a different (but constant) path-loss exponent on each tier, but only assumed Rayleigh fading in their work and examined the SINR  based on the base station with the smallest distance to the typical user. Also assuming different (but constant) path-loss exponents, Madhusudhanan et al.~\cite{MADBROWN2012} generalized this approach to arbitrary propagation effects by using propagation invariance. For constant (but different) parameters across all tiers, they also showed that a multi-tier network is stochastically equivalent to a single-tier network with unity parameters while all the original parameters are incorporated into the density of the (inhomogeneous Poisson) propagation process. In the context of cellular networks or related fields, we are unaware of work involving random path-loss exponents or equivalence results to the level of generality (due to more randomized parameters) presented here. 

\section{Model description}
\subsection{Random heterogeneous network}
We outline specifically what we mean by a random heterogeneous\footnote{This term should not be confused with a `nonhomogeneous' or `inhomogeneous' network where the base station density is location-dependent.} network model, which has random base station and path-loss parameters.  On $\R^2$, we model the base stations with a homogeneous Poisson point process
 $\Phi=\{X_i\}$ with density $\lambda$.  We take the `typical user' model approach where one assumes a typical user is located at the origin and consider what he perceives or experiences in the network.  Given $\Phi$, for each $X_i\in \Phi$, let $(P_i,S_i,A_i,\beta_i,T_i)$ be an  independent and identically distributed random vectors with positive (and possibly dependent) coordinates; hence we have an independently marked Poisson process, which we write, with a slight abuse of notation, also as $\Phi$.  For each base station $X_i$, the coordinate $T_i$ represents some parameter\footnote{$T_i$ can be in turn a vector of random parameters for each base station.} dependent on the base station; for example, $T_i$ could be the SINR threshold of the base station, which leads to a generalization of the multi-tier model~\cite{MADBROWN2011,DHILLON2012,MUKHERJEE2012}. For a signal emanating from a base station at $X_i$, let $P_i$ represent the power of the emitted signal whereas $S_i$ represent the propagation effects (shadowing and/or fading) experienced by the typical user.  The random coordinates $A_i$ and $\beta_i$ form part of the (randomly parameterized) path-loss function\footnote{It is often assumed that path-loss exponents $\beta_i>2$ to ensure well-behaved interference in the network.}
\begin{equation}\label{PATHLOSS}
\ell_i(|x|)=A_i|x|^{\beta_i}.
\end{equation}
A practical argument for random parameters $A_i$ and $\beta_i$ stems from path-loss models, such as Hata-based types, that incorporate base station height~\cite[Section 2.7.3]{STUBER2011}. Hence, it is not unreasonable to assume random $A_i$ and $\beta_i$, which lead to more descriptive models. Also the dependence of $A_i$, $\beta_i$, $S_i$, $P_i$ and $T_i$ for a given $X_i$ may be needed to model the situation when the operator tunes $P_i$ and $T_i$ depending on the base station's proposed load.

\subsection{Propagation process}
We define the propagation process, considered as a point process on the positive half-line $\mathbb{R}^{+}$, as
\begin{equation}
\{Y_i\}\equiv \left\{\frac{\ell_i(X_i)}{P_i S_i } \right\}=\left\{\frac{|X_i|^{\beta_i}}{\tildeS_i } \right\},
\end{equation}
where for compactness we often write $\tildeS_i\equiv P_iS_i/A_i$ and we will sometimes omit the subscript; for example, the random vector $(\tildeS,\beta,T)$ is equal in distribution to $(\tildeS_i,\beta_i,T_i)$.  To represent the collection of propagation processes and base station parameters, we introduce the independently marked point process 
\begin{equation}
\Psi\equiv\{(Y_i,T_i)\},
\end{equation}
which we call the {\em marked propagation process}. 
\begin{definition}
We say two heterogeneous network models are (stochastically) equivalent if they induce the same propagation process. 
\end{definition}
In summary, what the typical user `perceives' in our random heterogeneous network is represented by the independently marked point process $\Psi$, which motivates us to seek a method for finding equivalent networks.

\section{Results}
We present a useful lemma, which generalizes a previous result~\cite{blaszczyszyn2010impact,hextopoi} involving a non-random path-loss function. 
\begin{lemma}[Propagation (process) invariance]\label{l.invariance}
Assume that 
\[
\E( \widetilde{S}^{2/\beta}) < \infty.
\]
Then the propagation process $\Psi$ is an independently marked inhomogeneous Poisson point process on $\R^+$ with intensity measure
\begin{align}
\Lambda(s,t) &\equiv\E \left(\#( Y_i,T_i):Y_i\leq s, T_i\leq t\right) \\
&= \lambda \pi \E \left[(s\tildeS_i)^{2/\beta_i} \Ind ( T_i\leq t ) \right], 
\end{align}
where  $\#$ denotes the counting measure.
\end{lemma}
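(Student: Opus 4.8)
The plan is to exhibit $\Psi$ as the image of the marked Poisson process $\Phi$ under an explicit measurable map and then apply the Mapping (displacement) Theorem for Poisson point processes. Because the marks $(P_i,S_i,A_i,\beta_i,T_i)$ are i.i.d.\ and attached independently to the points of a homogeneous Poisson process of density $\lambda$, the process $\Phi$ is Poisson on $\R^2\times\mathcal{M}$, where $\mathcal{M}$ is the mark space, with intensity measure that factorizes as $\lambda\,dx\otimes F$, $F$ being the joint law of the mark vector. I would then introduce the map $\phi:(x,\tildeS,\beta,T)\mapsto(|x|^{\beta}/\tildeS,\,T)$ so that $\Psi=\phi(\Phi)$, and conclude from the Mapping Theorem that $\phi(\Phi)$ is again Poisson, now living on $\R^+\times\mathcal{T}$ (with $\mathcal{T}$ the space of the retained marks $T$), with intensity measure equal to the pushforward $\phi_*(\lambda\,dx\otimes F)$.

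The central computation is to identify this pushforward with the stated $\Lambda$. Writing the expected number of image points in $[0,s]\times[0,t]$ and using Tonelli to interchange the spatial integral with the mark expectation gives
\[
\Lambda(s,t)=\lambda\int_{\R^2}\E\!\left[\Ind(|x|^{\beta}\le s\tildeS)\,\Ind(T\le t)\right]dx .
\]
Conditioning on $(\tildeS,\beta,T)$, the inner integral over $x$ is just the area of the disk $\{x:|x|\le(s\tildeS)^{1/\beta}\}$, namely $\pi(s\tildeS)^{2/\beta}$; passing to polar coordinates makes this elementary, and substituting back yields exactly $\lambda\pi\,\E[(s\tildeS)^{2/\beta}\Ind(T\le t)]$.

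Two structural points then remain. First, for the Mapping Theorem to deliver a genuine, locally finite point process I must verify that $\Lambda$ is locally finite, i.e.\ $\Lambda(s,\infty)<\infty$ for every finite $s$; this is precisely the purpose of the hypothesis $\E(\tildeS^{2/\beta})<\infty$, which bounds the expected number of propagation values below any threshold and rules out accumulation at the origin of $\R^+$. Second, I must argue that the resulting Poisson process on the product space $\R^+\times\mathcal{T}$ is an \emph{independently marked} Poisson process on $\R^+$: this follows from the converse of the Marking Theorem, since disintegrating $\Lambda(ds,dt)$ as a ground intensity on $\R^+$ times a probability kernel in the $t$-variable exhibits the marks $T_i$ as conditionally independent given the points $Y_i$.

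I expect the only genuine obstacle to be the local-finiteness check, because with a random exponent $\beta$ the quantity $\E[(s\tildeS)^{2/\beta}]$ couples $s$ and $\tildeS$ in a way that is not immediately controlled by $\E(\tildeS^{2/\beta})$ alone. Here I would invoke the standing restriction $\beta>2$ (so that $\beta$ is bounded away from $0$), which bounds $s^{2/\beta}$ uniformly on any bounded range of $s$ and lets the moment hypothesis close the argument. The remaining ingredients — the mapping step, the polar-coordinate integral, and the marking structure — are all routine, and one could equally carry out the same computation at the level of the probability generating functional of $\Phi$ to confirm directly that $\Psi$ has the Poisson law with intensity $\Lambda$.
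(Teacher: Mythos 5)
Your proof takes essentially the same route as the paper's: the displacement (mapping) theorem applied to the independently marked Poisson process, followed by a Tonelli/polar-coordinate computation of the intensity measure that reduces to the disk area $\pi(s\tildeS)^{2/\beta}$, exactly matching the paper's two-line calculation. Your extra care about local finiteness (resolved via the standing assumption $\beta>2$ from the paper's footnote) and about recovering the independent-marking structure by disintegration makes explicit two points the paper leaves implicit, but the core argument is identical.
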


\begin{proof} 
By the displacement theorem~\cite[Section 1.3.3]{FnT1} and Campbell's theorem~\cite[Corolloary 2.2]{FnT1}, $\Psi$ is a Poisson point process with intensity measure
\begin{align}
\Lambda(s,t)&=\E[\sum_{(Y_i,T_i)\in\Phi} \Ind ( Y_i\leq s, T_i\leq t )] \\
&=\lambda \E \int_{\R^2}\Ind (Y_i\leq s) \Ind (  T_i\leq t ) dx \\
&=\lambda (2\pi)\E \int_0^{\infty}\Ind (r \leq (s\tildeS_i)^{1/\beta_i}) \Ind ( T_i\leq t ) rdr,
\end{align}
where integrating completes the proof.

\end{proof} 
\begin{remark} 
Consider the intensity function 
\[
\Lambda(s)=\lim_{t\rightarrow\infty}{\Lambda(s,t)},
\]
For an intuitive meaning of the $\Lambda$, the quantity $\Lambda(y_2)-\Lambda(y_1)$
can be interpreted as the mean number of base stations in the network received by the typical user with signal power between $1/y_2$ and $1/y_1$.
\end{remark} 
\begin{remark}
It is often assumed the propagation effects of $S$ are Rayleigh in
connection to the convenient properties of the resulting exponential
distributions of $S$ and their connection to Laplace
transforms. However, in the case of $\beta$ being equal to some
constant, then propagation invariance implies that $\Lambda$ depends
only on $\E( \widetilde{S}^{2/\beta}) $ but not on the type of
distribution of $S$. This holds for general functions of the
propagation process (example $\max\{Y_i\}$), and not just sums or
inteference terms. Hence, propagation effects can be represented by
setting $\tildeS$ to a constant and replacing $\lambda$ with
$\lambda'=\lambda\E( \tildeS^{2/\beta})$. One can also
(e.g. for pure mathematical convenience) assume exponential (say mean one) 
propagation effects and replace $\lambda$ with $\lambda'=\lambda\E(
\tildeS^{2/\beta})/\Gamma(2/\beta+1)$, where $\Gamma(2/\beta+1)$ is
the $2/\beta\,$th moment of exponential, variable of mean one. 
 For convenience we have included a table of  closed-form expressions
 of this moment for commonly used  fading and shadowing distributions (Table \ref{ESbeta}).
\end{remark}
%

\begin{table}
    \begin{tabular}{| p{1.6cm} | p{2.8cm}  | p{3cm}  |}
    \hline
    Distribution &Probablity density of $S$ & $\E(S^{2/\beta})$ \\ \hline
    Log-normal  &\shortstack{$1/(s\sqrt{2\pi}\sigma)$  \\$\times e^{-(\ln s-\mu)^2/(2\sigma^2) }	$} & $e^{2(\sigma^2+\mu\beta)/\beta^2}$ \\ \hline
    Exponential&$ \lambda e^{-\lambda s}$ & $\lambda^{-2/\beta}\Gamma(2/\beta+1)$  \\ \hline
    Weibull &\shortstack{ $(k/\lambda)(s/\lambda)^{k-1} $\\ $\times e^{-(s/\lambda)^{k} }$} & $\lambda^{2/\beta}\Gamma(2/(\beta k)+1) $ \\   \hline
   Nakagami &\shortstack{ $ \frac{ 2m^{m}}{\Gamma(m)\Omega^m}  s^{2m-1} $\\ $\times e^{-(m/\Omega)s^{2} }$} & $\frac{\Gamma(1/\beta+m)}{\Gamma(m)}\left(\frac{\Omega}{m}\right)^{1/\beta} $ \\   \hline
    Rice & \shortstack{$  (s/\sigma^2) I_0(s\nu / \sigma^2)  $ \\ $ \times  e^{-(s^2+\nu^2)/(2\sigma^2)}  $ }
& \shortstack{$ (2 \sigma^2)^{1/\beta} \Gamma(1/\beta+1) $
\\$ \times {}_1F_1\left(-1/\beta,1;-\frac{\nu^2}{2\sigma^2}\right)  $ }  \\ \hline
    \end{tabular}
\caption{ $\Gamma$ is the gamma function and $I_0$ is the modiffied Bessel function of the first kind. }\label{ESbeta}
\end{table}


Before presenting the main result of this paper, which gives a natural way of finding equivalent networks, we introduce the following functions
\begin{equation}
F_r'(t)=\frac{\phi(r,t)}{\phi(r)},
\end{equation}
where
\begin{equation}\label{e.phirt}
\phi(r,t)=\frac{1}{2\pi r }\frac{\partial }{\partial r}\Lambda(r^{\beta'},t) ,
\end{equation}
and 
\begin{equation}\label{e.phir}
\phi(r)=\lim_{t\rightarrow \infty }\phi(r,t),
\end{equation}
for some $\beta'>0$, and we implicitly assume that $0< \phi(r) <\infty$ and $F_r(t)$ is non-decreasing in $t$. \\
We now present the main result, which shows a stochastic equivalence between different heterogeneous networks in terms of propagation processes experienced by the typical user.  

\begin{prop}[Network equivalence]
Assume for some constant $\beta'>0$ and all finite $s>0$ that
\begin{equation}
\E \left[(s\tildeS_i)^{2\beta'/\beta}  \right] < \infty.
\end{equation}
Then $\Psi$ is stochastically equivalent to a location-dependent, independently marked propagation process $\Psi'=\{(Y_i',T_i')\}_i$ induced by an isotropic (possibly inhomogeneous) Poisson point process $\Phi'=\{X_i'\}$ on $\R^2$ with spatial intensity $\phi(|x|)dx$ of base stations, for which one sets the constant values $P_i\equiv S_i\equiv A_i\equiv1$ and $\beta_i\equiv\beta'$, and independently random marks $T_i'$ whose distribution, when associated to point $X_i$, is given by $F_r'(t)$ when $|X_i'|=r$.
\end{prop}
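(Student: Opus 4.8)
The plan is to exploit that both $\Psi$ and $\Psi'$ are independently marked Poisson point processes, so that their laws are completely determined by their intensity measures on $\R^+\times\{\text{marks}\}$; equivalence then reduces to checking that these two intensity measures coincide. That $\Psi$ is such a process, with intensity $\Lambda(s,t)$, is exactly the content of Lemma~\ref{l.invariance}. For $\Psi'$ I would first argue it is Poisson: $\Phi'$ is an isotropic Poisson process on $\R^2$ by assumption, the radial map $x\mapsto|x|^{\beta'}$ is a measurable displacement, so by the displacement theorem $\{Y_i'\}=\{|X_i'|^{\beta'}\}$ is Poisson on $\R^+$, and appending the conditionally independent, location-dependent marks $T_i'$ (with law $F_r'$ at radius $r$) keeps it Poisson on the product space by the marking theorem. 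Hence it suffices to compute the intensity measure of $\Psi'$ and match it with $\Lambda$.

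Second, I would compute $\Lambda'(s,t):=\E\#\{i:Y_i'\le s,\,T_i'\le t\}$ by Campbell's theorem and pass to polar coordinates:
\[
\Lambda'(s,t)=\int_{\R^2}\Ind(|x|^{\beta'}\le s)\,F_{|x|}'(t)\,\phi(|x|)\,dx
=2\pi\int_0^\infty \Ind(r^{\beta'}\le s)\,F_r'(t)\,\phi(r)\,r\,dr .
\]
The definitions are tailored so that $F_r'(t)\phi(r)=\phi(r,t)$ and $2\pi r\,\phi(r,t)=\partial_r\Lambda(r^{\beta'},t)$, which collapses the integrand to a total derivative in $r$. Since $\beta'>0$ the indicator restricts the integration to $r\in[0,s^{1/\beta'}]$, and the fundamental theorem of calculus telescopes the integral:
\[
\Lambda'(s,t)=\int_0^{s^{1/\beta'}}\frac{\partial}{\partial r}\Lambda(r^{\beta'},t)\,dr=\Lambda(s,t)-\Lambda(0,t)=\Lambda(s,t),
\]
using $\Lambda(0,t)=0$. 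As the two intensity measures on $\R^+\times\{\text{marks}\}$ agree and both processes are Poisson, they are equal in distribution, i.e.\ stochastically equivalent, which is the claim.

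The main obstacle, and the role of the standing hypotheses, is analytic rather than structural: one must justify that $\Lambda(r^{\beta'},t)$ is absolutely continuous in $r$, so that $\phi(r,t)$ exists as its density and the fundamental theorem of calculus applies, and that all quantities are finite. This is where the moment assumption $\E[(s\tildeS_i)^{2\beta'/\beta}]<\infty$ enters: it guarantees $\Lambda(r^{\beta'},t)$ is finite and differentiable (differentiation under the expectation in $\lambda\pi\E[(r^{\beta'}\tildeS)^{2/\beta}\Ind(T\le t)]$ is legitimate) and that $\Phi'$ is locally finite, so $\Psi'$ is a well-defined point process. Together with the standing assumptions $0<\phi(r)<\infty$ and monotonicity of $F_r'(t)$ in $t$, this also ensures $F_r'$ is a bona fide conditional distribution function, so the location-dependent marking is meaningful; one should additionally verify $F_r'(\infty)=1$, which follows directly from $\phi(r)=\lim_{t\to\infty}\phi(r,t)$.
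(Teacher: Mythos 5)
Your proposal is correct and follows essentially the same route as the paper: both establish that the propagation process induced by $\Phi'$ is Poisson, compute its intensity $\Lambda'(s,t)=2\pi\int_0^{s^{1/\beta'}}F_r'(t)\phi(r)\,r\,dr$ via Campbell's theorem in polar coordinates, and identify it with $\Lambda(s,t)$, the only cosmetic difference being that you integrate the identity $2\pi r\,\phi(r,t)=\frac{\partial}{\partial r}\Lambda(r^{\beta'},t)$ by the fundamental theorem of calculus while the paper differentiates its expression for $\Lambda'$ to recover $\phi(r,t)$. Your additional attention to the analytic hypotheses (absolute continuity of $\Lambda(r^{\beta'},t)$ in $r$, the boundary condition $\Lambda(0,t)=0$, and $F_r'(\infty)=1$) fills in details the paper leaves implicit.
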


\begin{proof} 
If we consider $\Phi'$ and set $P_i =S_i= A_i=1$, $\beta_i=\beta'$ and $T_i=T_i'$,   then the corresponding propagation process is a Poisson process with intensity
\begin{align}
\Lambda'(s,t)&=2\pi \E \int_0^{s^{1/\beta'}} \Ind ( T_i'(r)\leq t ) \phi(r) rdr\\
&=2\pi \int_0^{s^{1/\beta'}} F_r' (t ) \phi(r) rdr.
\end{align}
Differentiating the above integral with respect to $r$ recovers (\ref{e.phirt}), which in the limit $t\rightarrow\infty$ gives $\phi(r)$.


\end{proof} 
\begin{remark}
For some original heterogeneous network $\Phi$ with a propagation process $\Psi$,  the above results says that $\Phi$ will have an {\em isotropic representation} $\Phi'$ that (stochastically) has the same typical user propagation process $\Psi$. Furthermore, if the original network $\Phi$ has all $\beta_i$ equal to some \em{constant}, then $\Phi'$ will have a constant density denoted by $\lambda'$, hence $\Phi'$ is also homogeneous.
\end{remark}

\section{Examples}
We now illustrate our main result by covering three examples of network models, for each of which we find the isotropic representation. 

\subsection{Free-space compensation}
Consider an original network $\Phi$, with all $\beta_i=\bar{\beta}$ for some constant $\bar{\beta}>2$ and all $\tildeS_i=1$. For the isotropic representation $\Phi'$, set the relative path-loss exponent $\beta'=2$ (the `free-space' value), which implies the density
\begin{equation}\label{e.freespacephi}
\phi(|x|)=|x|^{2(2/\bar{\beta}-1)} .
\end{equation}
In other words, if one (virtually) assumes a free-space path-loss model when the (original) path-loss exponent is $\bar{\beta}$, then in order to obtain the equivalent propagation process one needs to compensate the equivalent model by assuming the isotropic power-law network density (\ref{e.freespacephi}) . Note that the density of the (free-space) isotropic representation $\Phi'$ decreases as $|x|$ increases due to $\bar{\beta}>2 \implies 2(2/\bar{\beta}-1)<0$. Similarly, one can compare networks with other $\beta$ values instead of the free-space value (for example, values corresponding to urban and suburban environments). 

\subsection{Propagation effects imply sparser networks}\label{sparser}
We consider two original networks $\Phi_1$ and $\Phi_2$ with the identical density $\lambda$ and path-loss parameters, which are all set to constants $\beta_i=\bar{\beta}>2$ and $A_i=\bar{A}>0$. Given $\Phi_1$, we assume each $X_i\in\Phi_1$ has an identically distributed {\em random} propagation variable $\tildeS_1$. Conversely, given $\Phi_2$, we assume that each $X_i\in\Phi_2$ has a {\em constant} propagation variable $\tildeS_2=\E(\tildeS_i)$. Then for $\beta'=\bar{\beta}$,  the two corresponding isotropic representations $\Phi_1'$ and $\Phi_2'$ have the respective densities 
\[
\phi_1(|x|)\equiv\lambda_1'=\lambda \E( \tildeS^{2/\beta}), \quad \phi_2(|x|)\equiv\lambda_2'=\lambda \E( \tildeS)^{2/\beta}.
\] 
Jensen's inequality implies $\E( \tildeS^{2/\beta})\leq \E( \tildeS)^{2/\beta}$ , hence a network with random propagation effects is equivalent to a {\em sparser} network without propagation effects.  More generally, for two propagation variables $ \tildeS_1$ and $ \tildeS_2$ with equal mean, if $ \tildeS_2$ is more variable than $ \tildeS_1$, formally defined by the  stochastic ordering $ \tildeS_1  \leq _{cx} \tildeS_2 $ (that is, $E[ f(\tildeS_1)] \leq f[E( \tildeS_2)]$ for all convex $f$), then $\Phi_2'$ is sparser than $\Phi_1'$. In other words, more variability in $ \tildeS$  effectively implies a sparser network.

\subsection{Two-tier network}
We now consider a two-tier Poisson network model where the path-loss parameters $A$ and $\beta$ are set to constants, which is a case of the multi-tier model~\cite{MADBROWN2011,MUKHERJEE2012,DHILLON2012}, but with the $\beta$ values depending on each tier. Specifically, the first tier is a Poisson process $\Phi_1$ with density $\lambda_1$, and given $\Phi_1$, each $X_i$ has the vectors of constant values $(\beta_1,A_1)$ and random $\tildeS_1$ and $T_1$. Similarly, the second tier is $\Phi_2$ with density $\lambda_2$ and parameters $(\beta_2,A_2)$ and random $\tildeS_2$ and $T_2$.  The resulting propagation process is a Poisson process with intensity
\begin{align}
\Lambda(s,t)&=\pi \lambda_1\E(\tildeS_1^{2/\beta_1}) \Ind (t_1\leq t) s^{2/\beta_1}\\
&+\pi
\lambda_2 \E( \tildeS_2^{2/\beta_2}) \Ind (t_2\leq t) s^{2/\beta_2}   .
\end{align}
Then this two-tier network is equivalent to a single-tier network $\Phi'$ ($\tildeS=1$) with spatial density $\phi(|x|)$, which for some arbitrary $\beta'$ is given by
\begin{align}
\phi(r)&=\frac{\beta'}{\beta_1} \lambda_1 \E( \tildeS_1^{2/\beta_1}) r^{2(\beta'/\beta_1-1)}\\
&+\frac{\beta'}{\beta_2} \lambda_2  \E( \tildeS_2^{2/\beta_2})  r^{2(\beta'/\beta_2-1)}  .
\end{align}
The independent marks $T_r'$ have a mixed distribution dependent on the distance from point $x$ to the origin, $|x|=r$, namely
\begin{equation}
\Prob( T_r' \leq t ) = p_1(r)\Prob(T_1 \leq t ) + p_2(r)\Prob(T_2 \leq t )  
\end{equation}
where $p_1(r)$ and $p_2(r)$ are the probabilities that a $T$ value  belongs to the first or second tier respetively 
\begin{align}
		p_1(r)=  &  \frac{\beta'}{\beta_1\phi(r)} \lambda_1\E( \tildeS_1^{2/\beta_1}) r^{2(\beta'/\beta_1-1)} \\
		p_2(r)= &   \frac{\beta'}{\beta_2\phi(r)} \lambda_2\E( \tildeS_2^{2/\beta_2})  r^{2(\beta'/\beta_2-1)}.
\end{align}

These results say that, in terms of propagation processes, this two-tier network behaves as a single isotropic network with random location-dependent marks $T_r'$. A natural choice for $\beta'$ is $\beta'=(\beta_1+\beta_2)/2$. In this case, we note that if $\beta_1=\beta_2$ (as in~\cite{DHILLON2012}), then $\Phi'$ is homogeneous.

\subsubsection{Numerical results}
We demonstrate this example further with some numerical values. For the $\beta$ and $A$ values of the two networks, we use the COST231-Hata model outlined in \cite[Section 2.7.3.3]{STUBER2011} with the parameters for a metropolitan area in a large city. For both tiers in the two-tier model, the user  height is $1$m and the carrier frequency is $1800$ MHz. However, in the tier-one and tier-two we specify that the antennas have different heights, $h_1=20$m and $h_2=100$ m respectively, which correspond to model parameters $\beta_1=3.638$, $A_1=1.986\times10^{14}$, $\beta_2=3.180$ and $A_2=2.148\times10^{13}$.  The densities of the two networks are $\lambda_1=1.8$ and $\lambda_2=2.2$.

For comparison purposes, we now consider a single-tier network $\Phi_3$ with density $\lambda_3 =\lambda_1+\lambda_2$, which is, in a way, the average of the two-tier network. We use the same COST231-Hata parameters but we set the antenna height to the (spatial) average of the two antenna heights, $h_3=(\lambda_1h_1+\lambda_2h_2)/(\lambda_3)= 64$m, giving $\beta_3= 3.307$ and $A_3=3.979\times10^{13}$. We compare the two-tier and one-tier network by using the equivalent isotropic representation with $\beta'=\beta_3$.

When plotting we multiply $\phi(r)$ by $ E(A^{\beta/2})=A^{\beta/2}$
so that the resulting quantities have the same magnitude as the
original base station densities. For both models, we plot
$\phi(r)\E(A^{\beta/2})$ with and without log-normal shadowing $S$ of
mean one and $5$ dB logarithmic standard
deviation~\footnote{\label{ftnte} $S=\exp[-\sigma^2/2+\sigma N]$,
  where $N$ is standard Gaussian variable;  the {\em logarithmic
    standard deviation} of $S$ is given by $\logsd=\sigma10/\log10$.}.

The numerical results (Fig. \ref{f.PhiRescaled}) show that random shadowing $S$ effectively corresponds to a sparser network while random path-loss exponent $\beta$ effectively increases the base station density around the typical user. One can conjecture that a stochastic ordering exists for $\beta$ akin to that for the propagation variable $\tildeS$ (in Section \ref{sparser}), but that is beyond the scope of this paper.

 
\begin{figure}[t!]
\begin{center}
\centerline{\includegraphics[width=1\linewidth]{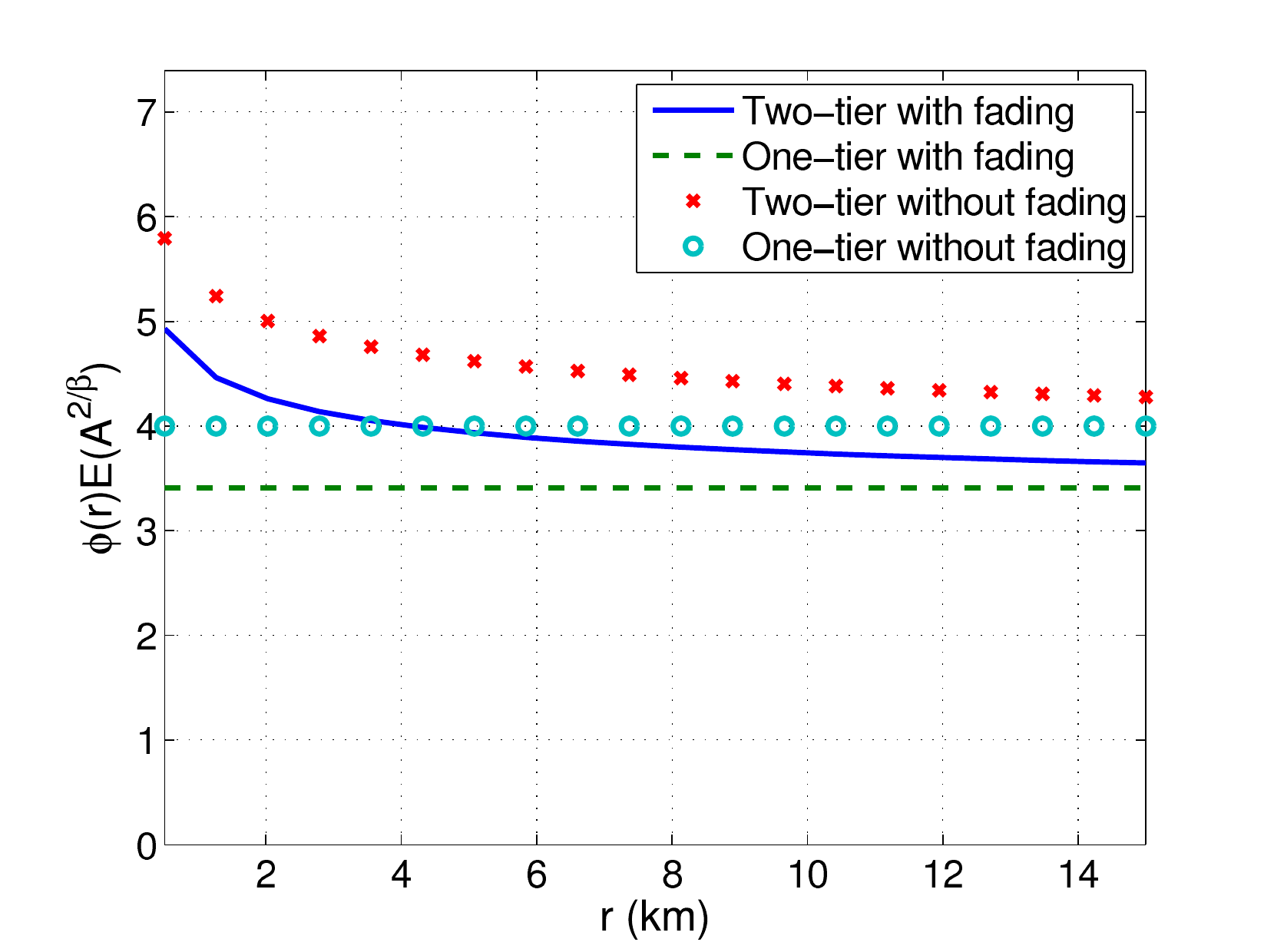}}
\caption{Two networks under the COST231-Hata path-loss model with and without log-normal shadowing of $5$ dB logarithmic standard deviation. For the typical user, random propagation effects and path-loss exponent effectively make the network sparser or denser  respectively. 
\label{f.PhiRescaled}}
\end{center}
\vspace{-2ex}
\end{figure}

\section{Future directions and conclusion}
We discussed the importance of obtaining tractable models for SINR, which is a function of propagation processes. For each base station $X_i$, we can assume that $T_i$ is a random variable that represents the SINR threshold, which is the technology-dependent level that the SINR must exceed to establish a connection. The equivalence network proposition then allows for a multi-tier network model to be represented as a single-tier model, which can lead to deriving SINR expressions for heterogeneous networks where each tier (or even base station) has a different SINR threshold, thus extending previous $1$-coverage probability results~\cite{DHILLON2012}, in a manner akin to~\cite{kcovsingle}, to the $k$-coverage case for multi-tier networks. 

It is sometimes argued that Poisson processes do not model certain (or all) tiers of heterogeneous networks adequately for they fail to capture clustering or repulsion of base stations. However, if sufficiently large log-normal shadowing is present, then it has been recently shown that a wide class of network configurations with constant densities can be approximated with Poisson processes~\cite{hextopoi}. Hence, Poison processes can still model all the tiers even if some clustering or repulsion exists, thus allowing one to apply the equivalence result to find equivalent single-tier networks. 

This paper has covered re-interpreting single-tier or multi-tier
network models (with each tier consisting of a homogeneous Poisson
process) into their equivalent  isotropic forms. Conversely, it is not unreasonable to assume that cellular networks in cities may have isotropic base station densities, and then one wanting to know their equivalent homogeneous forms. This reversed setting motivates the need for a more generalized version  of our network equivalence proposition, thus transforming isotropic networks to those with constant base station densities. 

In summary, for the power-law path-loss model with arbitrary propagation effects, we have shown that a wide class of heterogeneous network models based on Poisson processes are equivalent in terms of propagation processes perceived by the typical user.

\bibliographystyle{IEEEtran}

\end{document}